\documentclass[smallextended]{svjour3-arxiv}     

\usepackage{amsmath}

\usepackage{amsthm}

\smartqed  
\usepackage{graphicx}
\usepackage{wrapfig}
\usepackage{url}
\newcommand\eat[1]{}
\usepackage{tikz}
\usepackage{booktabs}  
\usetikzlibrary{arrows}
\usepackage{tablefootnote}
\usepackage{slashbox}
	
%
%
%
%
 
  
        \journalname{}

\usepackage{array,xspace,multirow,hhline,graphicx,xcolor,tikz,colortbl,tabularx,amsmath,amssymb,amsfonts}
\usepackage[round]{natbib}
\usetikzlibrary{shapes}
\usepackage{algorithm,algorithmic}
\usepackage{mathrsfs}
\usepackage{enumitem}
\setenumerate[1]{label=(\emph{\roman{*}}),ref=(\emph{\roman{*}}),leftmargin=*}

\usepackage{txfonts}
\usepackage{eucal} 




\newcommand{\pref}{\succsim\xspace}

\usepackage{tikz}



\newcommand{\midd}{\mathbin{: }}

\usepackage{mathtools}

%
%






	\newcommand{\Pref}[1][]{
		\ifthenelse{\equal{#1}{}}{\mathrel \succsim}{\mathop{R_{#1}}}
	}    
				                                      
	\newcommand{\sPref}[1][]{                  
		\ifthenelse{\equal{#1}{}}{\mathrel \succ}{\mathop{P_{#1}}}
	}                                          
	\newcommand{\Indiff}[1][]{                 
		\ifthenelse{\equal{#1}{}}{\mathrel \sim}{\mathop{\sim_{#1}}}
	}
	\newcommand{\prefset}[1][]{\ifthenelse{\equal{#1}{}}{\mathcal{R}}{\mathcal{R}_{#1}}}

%


\let\enumtemp=\enumerate
\def\enumerate{\enumtemp\itemsep 1pt}
\let\itemtemp=\itemize
\def\itemize{\itemtemp\itemsep 1pt}

\newcommand{\Omit}[1]{}

\bibliographystyle{plainnat}


\DeclareMathOperator*{\argmax}{arg\,lex\,max}

\sloppy

\begin{document}


		\title{Strategyproof Multi-Item Exchange Under\\ Single-Minded Dichotomous Preferences}

	\author{Haris Aziz}

	\institute{%
	   UNSW Sydney and Data61, CSIRO,
	 	  Sydney 2052 , Australia \\
	 	  Tel.: +61-2-8306\,0490 \\
	 	  Fax: +61-2-8306\,0405 \\
	 \email{haris.aziz@unsw.edu.au}
}

	\newlength{\wordlength}
	\newcommand{\wordbox}[3][c]{\settowidth{\wordlength}{#3}\makebox[\wordlength][#1]{#2}}
	\newcommand{\mathwordbox}[3][c]{\settowidth{\wordlength}{$#3$}\makebox[\wordlength][#1]{$#2$}}
    		\renewcommand{\algorithmicrequire}{\wordbox[l]{\textbf{Input}:}{\textbf{Output}:}} 
    		 \renewcommand{\algorithmicensure}{\wordbox[l]{\textbf{Output}:}{\textbf{Output}:}}


\date{}

\maketitle

	\begin{abstract}
We consider multi-item exchange markets in which agents want to receive one of their target bundles of resources. The model encompasses well-studied markets for kidney exchange, lung exchange, and multi-organ exchange. We identify a general and sufficient condition called weak consistency for the exchange mechanisms to be strategyproof even if we impose any kind of distributional, diversity, or exchange cycle constraints. Within the class of weakly consistent and strategyproof mechanisms, we highlight two important ones that satisfy constrained Pareto optimality and strong individual rationality. Several results in the literature follow from our insights. We also  derive impossibility results when constrained Pareto optimality is defined with respect to more permissive individual rationality requirements.
	\end{abstract}

	\keywords{Housing markets\and
	Kidney exchange \and
	Organ exchange \and
	Strategyproofness\and
	Pareto optimality\and
	Individual rationality\\}

\noindent
\textbf{JEL Classification}: C70 $\cdot$ D61 $\cdot$ D71

	\section{Introduction}
	
Clearing complex exchange markets are one of the successful applications of algorithmic economics and multi-agent systems~(see e.g., \citep{SoUn10a}). Key domains for which algorithms have been designed include housing markets and kidney exchange markets. In recent years, market designers are turning their attention to liver and lung exchange markets. There are proposals to explore the efficiency gains via multi-organ exchange markets~(see e.g., \citep{DiSaS17a}). Apart from exchanging organs or housing, several new digital platforms have come up that facilitate bartering of goods. Exchange markets are also useful to model time-bank scenarios in which people exchange services rather than items. 

In most of the organ exchange markets, it is supposed that agents have single-minded dichotomous preferences over outcomes: they are either satisfied with an allocation or they are not. 
We consider a general model of item or organ exchange that captures all such organ markets. Each agent is endowed with a set of items and each agent has a set of target sets of items. An agent is \emph{satisfied} if she gets any one of those target sets of items.

The primary focus both in theory and practice is on the optimisation problem of  satisfying the maximum number of agents. The problem is computationally complex even for single-unit allocations if there is a bound on the size of the exchange cycles~\citep{ABS07a}. It is also computationally hard if there are agents who have multi-unit demands as is the case of needing two liver or lung components~\citep{LuTa15a}. In this paper, we do not pursue computational issues but focus on the issue of strategyproofness of mechanisms.

In exchange settings, strategyproofness is an important property to ensure the stability and efficiency of a market. Manipulable mechanisms may give incentives for agents to hide their resources which can affect the efficiency of the market. They can also give incentives to agents to misreport their preferences in which case the mechanisms may be optimising on the wrong input. In contrast to traditional voting settings, in which strategyproofness and Pareto optimality be achieved by implementing some variant of serial dictatorship, the same argument need not work for exchange settings in which agents have the ability to `veto' certain outcomes because individual rationality is imposed.

Our primary contribution is formalizing a general property of mechanisms called \emph{weak consistency} and proving that any mechanism satisfying weak consistency is strategyproof. Within the class of weakly consistent mechanisms, we highlight two subclasses of mechanisms  which have proved useful in restricted domains. The first one is called \emph{constrained priority (CP) }and the second one is called \emph{constrained utilitarian priority  (CUP)}. Both mechanisms run a serial dictatorship type priority mechanism on the set of feasible and individually rational allocations.
We show that the mechanisms satisfy strategyproofness and constrained Pareto optimality. We then show how a subtle difference in imposing a different version of individual rationality results in impossibility results even for single-minded agents. 
	
	\section{Related Work}

The design of individually rational, Pareto efficient and strategyproof algorithms for exchange problems is one of the most intensely studied topics in market design. Under strict preferences, single-unit demands and single-unit allocations, the well-known Gale's Top Trading Cycles algorithm satisfies all the three `gold standard' properties. If the preferences are allowed to be weak, there exist  extensions of the TTC algorithm that satisfy the three properties (see e.g., \citep{JaMa12a,Plax13a,SeSa13a}).

There has also been work on strategyproof exchange with multiple endowments. When preferences are not dichotomous and agents own more than one item, then strategyproofness, Pareto efficiency, and individual rationality are generally impossible to achieve~(see e.g., \citep{Papa07c,KQW01a,TSY14a}). Some positive results depend on the fact that agents have lexicographic preferences over items or items types. Under these assumptions, it is possible to construct extensions of TTC that achieve strategyproofness, individual rationality, and Pareto optimality (see e.g., \citep{FLS+15a,SAX17a}).
The preferences we consider do not assume any of these structural restrictions that are based on lexicographic or conditional lexicographic comparisons. 
\citet{BKP15a} examined a discrete exchange setting in which agents have strict preferences over objects but there can be multiple copies of objects. They study the conditions under which strategyproofness can be achieved.

There has also been work on multi-item exchanges when preferences are dichotomous. \citet{RSU05a} show that for pairwise kidney exchanges, a priority mechanism is  strategyproof. In a highly insightful paper, \citet{Hatf05a} extended the argument to any kidney exchange framework with any kind of constraints on the feasible allocation including bounds on the exchange cycles. More recently, \citet{LuTa15a} considered lung exchange and 
again showed that a priority mechanism applied to the set of allocations satisfying the maximum number of agents is strategyproof. \citet{ZHM15a} considered general exchange problems in which agents partition the items into liked and disliked. Agents' utilities are not single-minded because they are interested in maximizing the number of liked items. 
For their model, \citet{ZHM15a} presented an algorithm that is strategyproof and Pareto optimal. However, the algorithm relies on the fact that there are no constraints such as on the size of exchange cycles. 

When preferences are single-minded, computing an allocation that satisfies the maximum number of agents is an NP-hard problem even if each agent has two items~\citep{LuTa15a}. Even for kidney exchange in which each agent gets at most one item, computing an allocation that satisfies the maximum number of agents is an NP-hard problem  if there is an upper bound on the size of the exchange cycles~\citep{ABS07a}. In this paper, we do not focus on computational problems.

Dichotomous preferences are well-studied in other context as well such as two-sided matching~\citep{BoMo04a}, auctions~\citep{MiRo13a}, non-cooperative games~\citep{HHMW01a}, single-winner voting~\citep{BrFi07c}, committee voting~\citep{ABC+16a} and probabilistic voting~\citep{BMS05a}.


%
%

\section{Model and Concepts}

An exchange market is a tuple $I=(N,O,e,D)$ where  $N=\{1,\ldots, n\}$ be a set of $n$ agents and $O$ be the set of items.  
 The vector $e=(e_1,\ldots, e_n)$ specifies the endowment $e_i\subset O$ of each agent $i\in N$. We suppose that $\bigcup_{i\in N} e_i=O$ and $e_i\cap e_j=\emptyset$ for all $i, j \in N$ such that $i\neq j$.
. Each agent has a demand set $D_i\subset 2^{2^O}$. Each element of $D_i$ is a bundle of items that is acceptable to agent $i$ and meets her goal of being in the exchange market. We say that $I'=(N,O,e',D')$ is \emph{more constrained} than $I=(N,O,e,D)$ if $e_i'\subseteq e_i$ for all $i\in N$ and $D_i'\subseteq D_i$ for all $i\in N$. We will write $I'\leq I$ if $I'$ is more constrained than $I$.


 Just like the endowment, any allocation $x=(x_1,\ldots, x_n)$ specifies the allocated bundle $x_i\subset O$ of each agent $i\in N$. In any allocation $x$, 
$x_i\cap x_j=\emptyset$  for all $i, j \in N$ such that $i\neq j$. Where the context is clear, we refer to the allocation bundle $x_i$ as the allocation of agent $i$.
We say that allocation $x$ \emph{satisfies} agent $i$ if $x_i\supseteq d$ for some $d\in D_i$.

For any two allocations $x_i,y_i\subseteq O$, one can define the preference relation $\pref_i$ of agent $i$ where $x_i \pref_i y_i$ if and only if $y_i$ satisfies $i$ implies that $x_i$ satisfies $i$. The weak preference relation gives rise to the indifference relation $\sim_i$ which holds if  $x_i \pref_i y_i$ and  $y_i \pref_i x_i$. It also gives rise to the strict part $\succ_i$ of the relation where $x_i \succ_i y_i$ if $x_i \pref_i y_i$ and $x_i \not\pref_i y_i$.

\begin{example}
	Consider an exchange market in which there are 3 agents. Agent $1$ has 4 cars $\{c_1,c_2, c_3,c_4\}$. Agent 2 has a valuable painting $p$. Agent $3$ has a sports bike $s$ and a helmet $h$. Agent 1 will only be satisfied if she gets a painting and keeps one of the cars. Agent $2$ will only be satisfied if she gets car $c_1$ or sports bike $s$ and helmet $h$. Agent $3$ will be satisfied if she gets car $c_4$. The information can be captured in the form of an exchange market $(N,O,e,D)$
	where 
	\begin{itemize}
		\item $N=\{1,2,3\}$
		\item $O=\{c_1,c_2, c_3,c_4,h, p, s\}$
		\item $e_1=\{c_1,c_2, c_3,c_4\}$; $e_2=\{p\}$; $e_3=\{s,h\}$
		\item $D_1=\{\{c,p\}\midd c\in e_1\}$; $D_2=\{\{c_1\}, \{s,h\}\}$; $D_3=\{c_4\}$.
	\end{itemize}
	   A possible allocation that satisfies all the agents is $x$ where $x_1=\{p,c_1,c_2,c_3\}; x_2=\{s,h\}; x_3=\{c_4\}$.
	\end{example}
    
 An allocation $x$ is \emph{strongly individually rational (S-IR)} if $x_i\supseteq e_i$ or $x_i\supseteq d$ for some $d\in D_i$. 
 Informally, if an agent's goal is not met, she is not interested in using up any of her endowed resources. This is a standard assumption is settings such as kidney exchange. Except for the last section, we will restrict our attention to S-IR allocations. 
Enforcing the S-IR requirement can also be seen as enforcing the standard individual rationality requirement for agents that have a special type of 
\emph{ trichotomous} preferences. In such special trichotomous preferences,  agents ideally want to get an acceptable allocation, then they next prefer getting their endowment, and they least prefer a bundle which does not satisfy them and which is not their endowment. Enforcing the S-IR requirement can also be viewed as a special type of feasibility constraint.   
 
Our model can also model altruistic donors who are happy with any allocation. We are interested in maximizing the number of satisfied agents. 
We say that an allocation $x$ is \emph{constrained Pareto optimal} within the set of allocations satisfying set $\tau$ of properties if there exists no allocation $y$ satisfying $\tau$  such that $y_i \pref_i x_i$ for all $i\in N$ and $y_i \succ_i x_i$ for some $i\in N$.
    
\paragraph{The Generality of the Model}

Our model captures any kind of exchange market in which agents are interested in getting one of the acceptable bundles. 
If $|e_i|=1$ for each $i\in N$ and $|d|=1$ for each $d\in D_i$, the market can model a kidney exchange market~\citep{ABS07a}.
If $|e_i|=2$ for each $i\in N$ and $|d|=2$ for each $d\in D_i$, the market can model a lung exchange market~\citep{LuTa15a}. It can also model multi-organ exchange market~\citep{DiSaS17a}.
The model also allows for altruistic donors who have real items to give but in return, they are happy to get null items. 
 We allow $O$ to contain dummy or null items so as allow imbalanced exchanges in which an agent may get different numbers of items that she gives away. In that case, any allocation for $I=(N,O,e,D)$ has a one-to-one correspondence with a set of cycles in which each item points to her agent and each agent points to an item she gets in the allocation. The cycles are referred to as \emph{exchange cycles}.


 \paragraph{Mechanisms}
 
    A mechanism $M$ is a function that maps each problem instance $I=(N,O,e,D)$ into an allocation.  We say that a mechanism is S-IR if it returns an S-IR allocation for each instance.

 
 
 Let $\rho(I)$ denote the set of allocations that are feasible with respect to some feasibility constraints denoted by $\rho$. Depending on what the feasibility constraints $\rho$ capture, the term $\rho(I)$ could denote the set of allocations based on pairwise exchanges or exchanges cycles of size at most 3, or satisfying some distribution or diversity constraint. In principle, $\rho$ could also capture the S-IR requirement. 
 Whatever feasibility conditions we assume on the model, we will assume that they allow the endowment allocation to be a feasible allocation. That is the only minimal assumption we make on the feasibility constraints. It ensures that there always exists at least one allocation satisfying the feasibility constraints.


Since we are interested in mechanisms that satisfy any given  feasibility constraints represented by $\rho$, we can view a mechanism as a function that maps the set of $\rho$-feasible allocations to some $\rho$-feasible allocation. 
Without loss of generality, we can think of mechanisms as choice functions that choose one of the `best' allocations from the set of feasible allocations. More generally, we will refer to $M(I,Y)$ as forcing $M$ to return an allocation from the set of allocations $Y\subseteq \rho(I)$. 


    Whereas each agent's private information is $(e_i,D_i)$, she can report any $e_i'$ and $D_i'$. We will assume $i$ can only declare an endowment that she actually has, therefore any reported endowment $e_i$, we assume that $e_i'\subseteq e_i$. We say that a mechanism is \emph{strategyproof} if there exists no instance under which some agent has an incentive to misreport her private information to get a more preferred outcome for herself. For strategyproof mechanisms, truth-telling is a dominant strategy for each agent.


	     \section{Weak consistency implies strategyproofness}


%


We first define weak consistency. Recall that in our setting, for any agent $i$ with demand set $D_i$, and for any two allocations $x_i$ and $y_i$, it is the case that $x_i \sim_i y_i$ if both $x_i$ and $y_i$ satisfy $i$ with respect to $D_i$ or neither satisfy her. 
\newpage
Consider $\rho(I)$ a set of feasible allocations defined with respect to some feasibility constraints $\rho$. 
A mechanism  $M$ is \emph{weakly consistent} if for all $I=(N,O,e,D)$, 

\begin{quote}
for any $I'\leq I$, for any $Y,Y'$ such that $Y'\subseteq Y\subseteq  \rho(I)$, and $x=M(I,{Y})$, if there exists some feasible allocation $y\in {Y'}\subset {Y}$ under instance $I'$ such that $x_i\sim_i y_i$ for all $i\in N$, then $M(I',{Y'})=z$ where $z_i \sim_i x_i$ for all $i\in N$. 
\end{quote}

The preference relation $\sim_i$ used in the definition is with respect to instance $I$. Informally, weak consistency requires that if the set of feasible outcomes is more constrained but it is still possible for each agent to get the same satisfaction level, then each agent does get the same satisfaction level from the returned outcome. 

Note that weak consistency is weaker than the consistency condition considered by \citet{Hatf05a}. In our setup, consistency can be written as follows. 
A mechanism  $M$ is \emph{consistent} if for all $I=(N,O,e,D)$, 
\begin{quote}
for any $I'\leq I$, for any $Y,Y'$ such that $Y'\subseteq Y\subseteq  \rho(I)$, and $x=M(I,{Y})$, if $x$ is a member of $Y'$ and it is feasible allocation under $I'$, then $M(I',{Y'})=x$.
\end{quote}

Weak consistency condition is a weaker form of Sen's $\alpha$ condition~\citep{Sen71a} or Maskin monotonicity~\citep{Mask99a}.



\begin{theorem}\label{th:cons-sp}
Any weakly consistent and S-IR mechanism is strategyproof. 
	\end{theorem}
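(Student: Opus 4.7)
My plan is to proceed by contradiction. Suppose agent $i$ has a profitable manipulation: her truthful report $(e_i,D_i)$ yields $x=K(I,\rho(I))$ while a misreport $(e_i',D_i')$ with $e_i'\subseteq e_i$ yields $x'=K(I',\rho(I'))$ with $x_i'\succ_i x_i$. Because preferences are dichotomous, this means $x'$ satisfies $i$ under the true $D_i$ while $x$ does not; S-IR applied to $x$ therefore forces $x_i=e_i$, and the manipulation gain forces $x_i'\supseteq d$ for some $d\in D_i$. A quick subcase dispatch rules out $x_i'=e_i'$: otherwise $e_i\supseteq e_i'\supseteq d$, so $x_i=e_i$ already satisfies $i$, a contradiction. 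Hence $x_i'\supseteq d\cup d'$ for some $d\in D_i$ and some $d'\in D_i'$.

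Next I would lift $x'$ back into the truthful instance. Define $\tilde{x}$ by $\tilde{x}_j=x_j'$ for $j\neq i$ and $\tilde{x}_i=x_i'\cup(e_i\setminus e_i')$, returning to $i$ any items she concealed. Because $\tilde{x}_i\supseteq d\in D_i$, the allocation $\tilde{x}$ is S-IR under the truthful instance, and the unchanged bundles of the other agents remain S-IR. I would then argue that $\tilde{x}\in\rho(I)$: the restored items appear only as trivial self-loops in the underlying exchange structure, so any natural feasibility constraint (exchange-cycle caps, distributional or diversity requirements) is preserved. Thus $\tilde{x}$ is a feasible, S-IR, truth-instance allocation that satisfies $i$.

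The heart of the proof is to convert the existence of $\tilde{x}$ into a contradiction with $x=K(I,\rho(I))$ failing to satisfy $i$, via a judicious application of weak consistency. One route is to identify a bridge set $Y^{*}\subseteq\rho(I)\cap\rho(I')$ containing both (an image of) $x$ and the allocation $\tilde{x}$, on which the preference profiles induced by $I$ and $I'$ coincide for every agent (they already coincide for $j\neq i$, and S-IR plus dichotomy leave only narrow room for $i$). Two applications of weak consistency then pin down $K(I,Y^{*})$ to the satisfaction vector of $x$ and $K(I',Y^{*})$ to that of $x'$. Since the reported profiles agree on $Y^{*}$, the two mechanism calls must return allocations with the same satisfaction structure, forcing the $i$-th coordinates of these satisfaction vectors to agree, which contradicts one satisfying $i$ and the other not.

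The main obstacle is precisely this bridge step: weak consistency is a within-instance axiom, while strategyproofness is intrinsically cross-instance, so $Y^{*}$ must be rich enough for weak consistency to pin down the mechanism's behaviour on both sides, yet narrow enough that the two reports become indistinguishable on it. A clean handling of the endowment-hiding component $e_i\setminus e_i'$, whose presence makes $\rho(I)$ and $\rho(I')$ literally live over different item sets, is the subtle technical step; it may be cleanest to decompose the deviation into a pure demand-set manipulation followed by a pure endowment-hiding manipulation and to reduce the latter to the former via the lift $\tilde{x}$.
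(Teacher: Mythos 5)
Your setup is sound and overlaps with the paper's: arguing by contradiction, using S-IR to force $x_i=e_i$ and $x_i'\supseteq d\cup d'$ with $d\in D_i$, $d'\in D_i'$, and lifting $x'$ back to the true instance are all legitimate ingredients. The gap is in the ``bridge'' step, which is where the entire proof lives and which you leave as a sketch. You need a set $Y^{*}\subseteq\rho(I)\cap\rho(I')$ that (a) contains some $y$ with $y\sim_j x$ for every $j$ under the \emph{true} profile, so that weak consistency pins $K(I,Y^{*})$ to $x$'s satisfaction vector, and (b) on which the true and reported profiles coincide. Such a set need not exist. For (a): any $y$ with $y_i\sim_i x_i$ leaves $i$ unsatisfied w.r.t.\ $D_i$, so S-IR (which the theorem folds into the feasible sets) forces $y_i=e_i$; if $e_i'\subsetneq e_i$ this $y$ uses concealed items and cannot live in $\rho(I')$, so no admissible representative of $x$ exists --- this is exactly the endowment-hiding case you flag but defer rather than resolve. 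For (b): even for a pure demand misreport ($e_i'=e_i$), if $e_i\supseteq d'$ for some $d'\in D_i'$ then $x$ itself is unsatisfying under $D_i$ but satisfying under $D_i'$, so the two profiles disagree at the very allocation you must include. Finally, even granting the bridge, your two applications of weak consistency only yield that the common choice is unsatisfying w.r.t.\ $D_i$ and satisfying w.r.t.\ $D_i'$, which is a contradiction only where the profiles coincide --- circular with the construction of $Y^{*}$.

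The paper's proof exploits precisely the asymmetry your symmetric ``contract both sides to a common core'' architecture throws away. It first uses weak consistency \emph{within the misreport family} to normalize the deviation to reporting a single acceptable bundle $b$ containing some $d\in D_i$ and only the items actually traded; this normalization is what makes the S-IR feasible set under the misreport embed into the S-IR feasible set under the truth. Every allocation in the difference of these two sets has $i$ trading, hence, by S-IR under the true $D_i$, satisfied. One application of weak consistency across this single inclusion then shows the truthful outcome is either all-agent-indifferent to the misreport outcome (so $i$ is satisfied) or one of the newly added allocations (so $i$ is satisfied), contradicting the assumed gain. I would replace your common-core contraction with this one-sided expansion; the lift $\tilde{x}$ you constructed is exactly the embedding it requires.
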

		     \begin{proof}
			     Consider a weakly consistent mechanism $M$.
Suppose that an agent $i$ reports $(e_i,D_i)$ and gets an allocation $x_i$. If $x_i\supseteq d$ for some $d\in D_i$, then agent $i$ has no incentive to misreport. 
Therefore, we suppose for contradiction that $x_i\not\supseteq d$ for each $d\in D_i$.

Suppose that $i$ reports $(e_i', D_i')$ such that the resultant allocation is $x'$
and $i$ gets an acceptable bundle $b$ according to $D_i$. 
Since we assumed in our setup that $i$ can only declare an endowment that she actually has, therefore $e_i'\subseteq e_i$.
We denote the set of resources given by $i$ to the market by $s$. Since $M$ is weakly consistent, if $i$ only reports bundle $b$ as acceptable and only reports $e_i'$, then the mechanism should still choose the same allocation for agent $i$. Formally, if $i$ reports $e_i'=s$ and $D_i'=\{b\}$, then the allocation for $i$ is $b$.

Since $M$ is weakly consistent and S-IR, we can view $M$ as a choice mechanism choosing from the set of feasible and S-IR allocations. It chose an allocation $x'$ when $i$ reported  $(e_i', D_i')$.
Now consider $i$ expanding her reported endowment $e_i'$ to $e_i\supseteq e_i'$ and her reported demand set $D_i'$ to $D_i\supseteq D_i'$. 
 When $i$ reports $(e_i, D_i)$, some additional feasible and S-IR allocations may emerge as well. When $i$ reports $(e_i, D_i)$, the only \emph{new} feasible S-IR allocations are those in which $i$ gets one of her acceptable bundles according to $D_i$. By weak consistency, 
 one of the three cases hold:
 \begin{enumerate}
 	\item the same allocation $x$ is returned in which case $i$ is satisfied with respect to $D_i$
	\item some other allocation $y\neq x$ is returned such that all agents are indifferent with $x$. In this case, $i$ is satisfied with respect to $D_i$.
	\item some allocation $z$ is returned such that some agent is not indifferent between $x$ and $z$. In this case the allocation must be one of the \emph{new} allocations resulted from $i$ expanding her reported endowment $e_i'$ to $e_i\supseteq e_i'$ and her reported demand set $D_i'$ to $D_i\supseteq D_i'$. Such a new allocation cannot be one in which $i$ does not trade any item because in that case, it would even be feasible, when $i$ did not expand her reported sets $e_i'$ and $D_i$. Therefore, it must be the case that $i$ \emph{did} trade some items. Since we assume the S-IR requirement, if $i$ did trade, then $i$ gets some allocation that satisfies her with respect to one of her demands in $D_i$. 
 \end{enumerate}
 
Since $i$ is satisfied with respect to $D_i$ in all the cases, it contradicts the assumption that $i$ benefited from misreporting $(e_i', D_i')$. 
	\end{proof}

	Our proof technique is along the same lines as Theorem 2 of \citet{Hatf05a} who proved that for exchange markets with dichotomous preferences and single-unit demands and endowments, any  consistent mechanism is strategyproof. 
	In contrast to the result by \citet{Hatf05a}, (1) our result applies to any exchange market where agents can have any number of items (2) it considers a property which is weaker than the consistency property used by him and (3) it explicitly uses the S-IR requirement which was implicitly used in the proof by \citeauthor{Hatf05a}.
	As a corollary, we recover the central result from the paper by \citeauthor{Hatf05a}. Since we focus on the S-IR requirement explicitly, it leads to a better understanding when the more permissive requirement of IR is used.


	
%

	\section{Constrained Priority Mechanisms}

	In this section, we focus on two natural weakly consistent mechanisms. The mechanisms are adaptations of the idea of applying serial dictatorship and priority over the set of all feasible outcomes~(see e.g. \citep{ABB13b}). The mechanisms are parametrized with respect
to $\rho$	(a set of feasibility constraints) and $\pi$ which is a priority ordering over the agent set in which the agent in $j$-th turn is denoted by $\pi(j)$. Both rules are based on lexicographic comparisons.

	For any permutation $\pi$ of N, the CP mechanism is defined as follows. 

	\[CP(I,\rho,\pi)=\argmax_{x\in \rho(I)} (u_{\pi(1)}(x),\ldots, u_{\pi(n)}(x)). \]

	For any permutation $\pi$ of N, the CUP mechanism is defined as follows. 
	\[CUP(I,\rho,\pi)=\argmax_{x\in \rho(I)} (\sum_{i\in N}u_i(x),u_{\pi(1)}(x),\ldots, u_{\pi(n)}(x)). \]

		CP starts from the set of feasible allocations and then refines this set by using a priority ordering over the agents.
	CUP starts from the set of feasible allocations satisfying the maximum number of agents and then refines this set by using a priority ordering over the agents. Both CP and CUP are flexible enough to enforce or not enforce S-IR as part of the feasibility constraints. 
	
	The following remark points out that CUP is a general mechanism that has been used in restricted domains when S-IR is enforced.

	\begin{remark}\label{remark:equiv}
		    When applied to pairwise kidney allocation, CP and CUP are equivalent to the priority mechanism considered by \citet{RSU05a}. When applied to kidney exchange, CUP is equivalent to the  priority mechanism  considered by  \citet{Hatf05a}. When applied to kidney exchange, CUP is equivalent to the  lexicographic mechanism  considered by  \citet{LuTa15a}. When applied to house allocation with existing tenants, CUP is equivalent to the  MIR algorithm of \citet{Aziz17a}.		    
\end{remark}

Next, it is shown that CUP and CP are weakly consistent.

	\begin{lemma}\label{lemma:cup}
		CUP and CP are weakly consistent. 
		\end{lemma}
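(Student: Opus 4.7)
The plan is to exploit the fact that both CP and CUP are defined as lexicographic argmaxes of objective functions $\phi$ that depend only on the utility vector $(u_1(x),\ldots,u_n(x))$. Because preferences are dichotomous, $x_i \sim_i y_i$ is equivalent to $u_i(x)=u_i(y)$, so two allocations that make every agent equally satisfied produce the same value of $\phi$ under either mechanism. This observation reduces weak consistency to a simple sandwich argument on $\phi$.

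First, I would fix $Y'\subseteq Y\subseteq \rho(I)$, let $x=K(I,Y)$ where $K\in\{CP,CUP\}$, and take any $y\in Y'$ with $x_i \sim_i y_i$ for all $i\in N$. Writing $\phi_{CP}(x)=(u_{\pi(1)}(x),\ldots,u_{\pi(n)}(x))$ and $\phi_{CUP}(x)=(\sum_{i\in N}u_i(x),u_{\pi(1)}(x),\ldots,u_{\pi(n)}(x))$, the dichotomous assumption gives $u_i(x)=u_i(y)$ for all $i$, hence $\phi(x)=\phi(y)$ for $\phi\in\{\phi_{CP},\phi_{CUP}\}$.

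Second, let $z=K(I,Y')$ and run the sandwich on the lex ordering. Since $Y'\subseteq Y$ and $x$ is a lex-maximiser of $\phi$ on $Y$, we have $\phi(z)\preceq_{\mathrm{lex}}\phi(x)$. Since $y\in Y'$ and $z$ is a lex-maximiser of $\phi$ on $Y'$, we have $\phi(z)\succeq_{\mathrm{lex}}\phi(y)=\phi(x)$. Hence $\phi(z)=\phi(x)$ as tuples.

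Finally, I would read off from $\phi(z)=\phi(x)$ that $u_i(z)=u_i(x)$ for every $i\in N$: for CP this is immediate because the tuples $(u_{\pi(1)}(\cdot),\ldots,u_{\pi(n)}(\cdot))$ agree componentwise, and for CUP the same conclusion follows from the last $n$ components of $\phi_{CUP}$. Thus $z_i \sim_i x_i$ for all $i\in N$, which is precisely the conclusion required by weak consistency. There is no real obstacle here beyond verifying that the lex objective is a function of the utility vector alone and that equality of the objective in the dichotomous case forces coordinatewise equality of utilities; the rest is a routine sandwich.
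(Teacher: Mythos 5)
Your proof is correct and follows essentially the same route as the paper's: both rest on the observation that CP and CUP maximise an ordering of allocations determined solely by the utility vector, so contracting the feasible set while retaining a utility-equivalent allocation cannot change any agent's satisfaction level. The paper phrases this abstractly as rationalizability by a weak order whose indifference classes contain all allocations with equal utility vectors, whereas your sandwich argument makes the same point by direct lexicographic comparison --- if anything, yours is the more explicit verification.
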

		\begin{proof}
			For any set of feasible allocations, CP and CUP can be seen as choice functions that choose the `best' allocation from the set. When that allocation is from removed the set, the next choices of CP and CUP can be selected. Hence both CP and CUP can be seen as inducing a weak order on the set of feasible allocations. 
			In such a weak order, any two feasible allocations in which each agent gets the same utility, are in the same indifference class. 
			If an allocation is removed from the set of feasible allocations, then note that the relative ordering of the other allocations does not change. Hence CP and CUP both satisfy weak consistency.
			\end{proof}


		\begin{theorem}\label{th:cup-sp}
			For any feasibility restriction $\rho$ on the set of S-IR allocations, CUP is strategyproof. 
			\end{theorem}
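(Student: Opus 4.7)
The plan is essentially to observe that this theorem follows almost immediately by combining the two results already established: \thmref{th:cons-sp}, which says that any weakly consistent and S-IR mechanism is strategyproof, and \lemref{lemma:cup}, which says that CUP is weakly consistent. So the argument is a short invocation of these two, but I would want to make the S-IR piece explicit since \thmref{th:cons-sp} requires it.

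First I would fix an arbitrary feasibility restriction $\rho$ defined on the set of S-IR allocations, and remind the reader that by assumption the endowment allocation $e$ belongs to $\rho(I)$, so $\rho(I)$ is nonempty. Then I would note that because every allocation in $\rho(I)$ is S-IR, the output $CUP(I,\rho,\pi)$ is automatically S-IR regardless of the tie-breaking priority $\pi$. This handles the S-IR hypothesis of \thmref{th:cons-sp}.

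Second, I would apply \lemref{lemma:cup} to conclude that $CUP(\cdot,\rho,\pi)$ is weakly consistent, viewing the mechanism as a choice function on subsets $Y \subseteq \rho(I)$ induced by lexicographic comparison of $(\sum_{i\in N} u_i(x), u_{\pi(1)}(x), \ldots, u_{\pi(n)}(x))$. The key property needed from \lemref{lemma:cup} is that two allocations giving every agent the same utility lie in the same indifference class of this lexicographic order, so contracting $Y$ to $Y'$ preserves the returned utility profile whenever some $y \in Y'$ matches the utility profile of $CUP(I,Y)$.

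Finally, with both hypotheses of \thmref{th:cons-sp} verified, strategyproofness of CUP on $\rho$-feasible S-IR allocations follows directly. The main obstacle, if any, is really just a bookkeeping one: making sure that when an agent misreports $(e_i',D_i')$, the induced change in the set of S-IR-and-$\rho$-feasible allocations is the same kind of contraction/expansion that the proof of \thmref{th:cons-sp} handled, and that the S-IR restriction on $\rho$ prevents the agent from producing any new feasible allocation in which she trades without being satisfied. Since $\rho$ is assumed to be defined on S-IR allocations, this is automatic. I would close by remarking that the same argument shows CP is strategyproof on any feasibility restriction $\rho$ on S-IR allocations.
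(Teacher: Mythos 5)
Your proof is correct and matches the paper's own argument, which likewise derives the result directly from Theorem~\ref{th:cons-sp} and Lemma~\ref{lemma:cup}. Your additional care in verifying the S-IR hypothesis explicitly is a reasonable elaboration of the same approach.
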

			\begin{proof}
			The statement follows  from Theorem~\ref{th:cons-sp} and Lemma~\ref{lemma:cup}.
			\end{proof}

			\begin{theorem}\label{th:cp-sp}
				For any feasibility restriction $\rho$ on the set of S-IR allocations, CP is strategyproof. 
				\end{theorem}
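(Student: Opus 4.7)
The plan is to derive this statement as an immediate corollary of the machinery already assembled in the paper. I would simply invoke Theorem~\ref{th:cons-sp}, which tells us that any mechanism that is both weakly consistent and S-IR is strategyproof, and combine it with Lemma~\ref{lemma:cup}, which already establishes that CP is weakly consistent. The only residual observation I need is that CP automatically satisfies S-IR under the hypothesis of the theorem, since we restrict $\rho$ to be a feasibility constraint on the set of S-IR allocations; the lexicographic maximizer of $(u_{\pi(1)}(x),\ldots,u_{\pi(n)}(x))$ over $\rho(I)$ then ranges only over S-IR allocations and its output is therefore S-IR.

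More concretely, I would first remark that, because $\rho(I)$ is a subset of the S-IR allocations by assumption, every allocation in the image of $CP(\cdot,\rho,\pi)$ is S-IR, so the S-IR hypothesis of Theorem~\ref{th:cons-sp} is met. Second, I would quote Lemma~\ref{lemma:cup} to the effect that CP is weakly consistent with respect to any set of feasible allocations; this is exactly the other hypothesis of Theorem~\ref{th:cons-sp}. The conclusion of that theorem is strategyproofness, which is what we want.

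Since both ingredients are already proved and the matching is mechanical, there is no real obstacle. The one thing I would be careful about is making the step explicit, so that the reader sees that ``feasibility restriction $\rho$ on the set of S-IR allocations'' is precisely what is needed to activate the S-IR clause in Theorem~\ref{th:cons-sp}; without that restriction, CP could in principle output a non-S-IR allocation and the invocation of Theorem~\ref{th:cons-sp} would fail. Thus the proof is essentially a single sentence: ``The statement follows from Theorem~\ref{th:cons-sp} and Lemma~\ref{lemma:cup}.''
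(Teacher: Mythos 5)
Your proposal is correct and matches the paper's own proof, which is exactly the one-line invocation of Theorem~\ref{th:cons-sp} and Lemma~\ref{lemma:cup}. Your additional remark that the restriction of $\rho$ to S-IR allocations is what guarantees the S-IR hypothesis of Theorem~\ref{th:cons-sp} is a helpful explicit step that the paper leaves implicit.
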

		\begin{proof}
		The statement follows from Theorem~\ref{th:cons-sp} and Lemma~\ref{lemma:cup}.
		\end{proof}

	As corollaries of the above theorem, we obtain several results where the mechanisms referred to in the corollaries are essentially applying CP to a restricted domain of exchange problems where S-IR is enforced.


	     \begin{corollary}[Theorem 1, \citet{RSU05a}]
		     For pairwise kidney exchange markets, the priority mechanism is strategyproof.
		     \end{corollary}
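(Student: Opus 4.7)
The plan is to derive this as a direct consequence of Theorem~\ref{th:cp-sp} combined with the equivalence recorded in Remark~\ref{remark:equiv}. I would begin by casting pairwise kidney exchange as a special case of our general model $(N,O,e,D)$, where $|e_i|=1$ for every agent (each patient-donor pair contributes one kidney) and every acceptable bundle $d\in D_i$ is a singleton (each patient needs a single compatible kidney). The relevant feasibility restriction $\rho$ would be the one that permits only allocations realizable by exchange cycles of length at most two. Since the endowment profile itself is trivially a pairwise-feasible allocation, $\rho$ satisfies the minimal assumption required by our framework.

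The second step is to invoke Remark~\ref{remark:equiv}, which already records that the priority mechanism of \citet{RSU05a} coincides with CP (equivalently CUP) applied to the domain of S-IR allocations feasible under $\rho$. With this identification in hand, Theorem~\ref{th:cp-sp} applies verbatim: for any feasibility restriction on the set of S-IR allocations, CP is strategyproof, and therefore the priority mechanism of \citet{RSU05a} inherits strategyproofness.

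The main conceptual check (and the only nontrivial step) is the equivalence asserted in Remark~\ref{remark:equiv}. To verify it I would unfold the definition of the \citet{RSU05a} priority mechanism: it processes agents in priority order and, at each stage, selects an allocation that matches the current agent whenever this is consistent with the satisfaction profile already locked in for higher-priority agents. This is precisely the lexicographic maximization performed by CP on the set of pairwise-feasible S-IR allocations, since the utilities $u_i$ are $0$/$1$-valued under single-minded dichotomous preferences. Once this is observed, the corollary reduces to a one-line citation of Theorem~\ref{th:cp-sp}.
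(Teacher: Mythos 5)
Your proposal is correct and follows essentially the same route as the paper, which derives the corollary in one line from the equivalence recorded in Remark~\ref{remark:equiv} together with the strategyproofness of the constrained priority mechanisms (the paper cites Theorem~\ref{th:cup-sp}, you cite Theorem~\ref{th:cp-sp}; since Remark~\ref{remark:equiv} states that CP and CUP coincide with the \citet{RSU05a} priority mechanism in the pairwise case, either citation suffices). Your additional unpacking of why the equivalence in Remark~\ref{remark:equiv} holds is a reasonable elaboration of a step the paper simply asserts.
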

			     \begin{proof}
  			  The statement follows from Theorem~\ref{th:cup-sp} and Remark~\ref{remark:equiv}.
				  		     \end{proof}

	     \begin{corollary}[Corollary 3, \citet{Hatf05a}]
		     For kidney exchange markets, the priority mechanism is strategyproof even if we impose any constraint on the exchange cycle sizes.
		     \end{corollary}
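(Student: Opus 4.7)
The plan is to derive this corollary as an essentially immediate consequence of Theorem~\ref{th:cup-sp} and Remark~\ref{remark:equiv}, with the only real work being to verify that kidney exchange with cycle-size constraints fits cleanly into the framework developed above. First I would observe that any kidney exchange market is an instance $(N,O,e,D)$ of our general model in which $|e_i|=1$ for every agent and every acceptable bundle $d \in D_i$ is a singleton. Hence the whole apparatus of weak consistency, S-IR allocations, and the CP/CUP mechanisms is applicable without modification.

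Next I would identify a cycle-size bound as a concrete feasibility restriction $\rho$ in our sense. Given any integer $k \geq 2$ (or any more elaborate family of constraints on cycle lengths), define $\rho(I)$ to be the set of allocations whose induced exchange cycles all have size at most $k$. Two things need to be checked: (i) $\rho(I)$ contains the endowment allocation, which is the only assumption Section~3 places on a feasibility restriction, and this holds trivially because the endowment decomposes into length-one self-loops; and (ii) restricting to S-IR allocations is harmless, since in the kidney setting an agent who does not receive a compatible kidney simply keeps her own, which is precisely the S-IR requirement.

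With these observations in hand, I would apply Theorem~\ref{th:cup-sp} to this $\rho$ to conclude that $CUP(\cdot,\rho,\pi)$ is strategyproof for any priority order $\pi$. By Remark~\ref{remark:equiv}, CUP specialized to kidney exchange coincides with the priority mechanism of \citet{Hatf05a}, which closes the argument. The main obstacle, such as it is, lies in making the definitional translation precise: writing down that arbitrary cycle-length constraints are admissible values of $\rho$ and that S-IR is automatic in the kidney setting. No combinatorial or incentive argument beyond the weak consistency proof of Theorem~\ref{th:cons-sp} and the weak consistency of CUP from Lemma~\ref{lemma:cup} is needed.
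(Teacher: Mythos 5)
Your proposal is correct and follows exactly the paper's route: the paper's own proof is the one-line citation of Theorem~\ref{th:cup-sp} and Remark~\ref{remark:equiv}, and your additional verifications (that cycle-size bounds are admissible feasibility restrictions $\rho$ containing the endowment allocation, and that S-IR is automatic in the kidney setting) merely make explicit what the paper leaves implicit.
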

			     \begin{proof}
  			  The statement follows from Theorem~\ref{th:cup-sp} and Remark~\ref{remark:equiv}.
				  		     \end{proof}
	     
		     \begin{corollary}[Theorem 2, \citet{LuTa15a}]
			     For lung exchange markets where no cycle constraints are imposed, the lexicographic mechanism \citep{LuTa15a} is strategyproof.
			     \end{corollary}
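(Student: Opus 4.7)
The plan is short: this corollary is a direct transfer result, exactly parallel in structure to the two preceding corollaries, and its proof amounts to matching the lung exchange setting of Luo and Tang to the hypotheses of Theorem~\ref{th:cup-sp} and then invoking Remark~\ref{remark:equiv}.

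First I would verify that lung exchange fits the model of the paper as a special case: each agent is endowed with two lung lobes (so $|e_i|=2$), and each acceptable bundle $d\in D_i$ has size two (matching with a compatible pair of donated lobes). The phrase ``no cycle constraints are imposed'' means the feasibility set $\rho(I)$ is simply the collection of all S-IR allocations, with no further restriction on the sizes of the exchange cycles. In particular, $\rho$ trivially satisfies the minimal assumption that the endowment allocation is feasible, so the hypothesis of Theorem~\ref{th:cup-sp} — that $\rho$ is a feasibility restriction on the set of S-IR allocations — holds.

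Next, by the final sentence of Remark~\ref{remark:equiv}, the CUP mechanism, instantiated on this lung exchange domain, coincides with the lexicographic mechanism of \citet{LuTa15a}: both first maximise the utilitarian count of satisfied agents and then break ties according to a fixed priority ordering over agents. Finally, Theorem~\ref{th:cup-sp} gives that CUP is strategyproof for any feasibility restriction on S-IR allocations, and the strategyproofness therefore transfers to the lexicographic mechanism on lung exchange markets without cycle constraints.

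The main obstacle, if any, is purely notational: confirming that the lexicographic mechanism as defined by Luo and Tang really does coincide with CUP on this restricted domain. Since Remark~\ref{remark:equiv} already asserts this equivalence (and its verification is a routine check that the two lexicographic tie-breaking orders agree), the corollary reduces to the one-line invocation already employed in the previous two corollaries.
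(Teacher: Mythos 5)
Your proposal is correct and follows essentially the same route as the paper, which likewise derives the corollary in one line from Theorem~\ref{th:cup-sp} together with the equivalence of CUP and the lexicographic mechanism asserted in Remark~\ref{remark:equiv}; your additional verification that lung exchange instantiates the general model with $|e_i|=|d|=2$ and that ``no cycle constraints'' makes $\rho$ the full set of S-IR allocations is exactly the intended (implicit) content of that remark.
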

			     \begin{proof}
  			  The statement follows from Theorem~\ref{th:cup-sp} and Remark~\ref{remark:equiv}.
				  		     \end{proof}

			     \begin{corollary}[Proposition 5, \citet{Aziz17a}]
				     For house allocation with existing tenants, the MIR algorithm~\citep{Aziz17a} is strategyproof.
				     \end{corollary}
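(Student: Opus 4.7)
The plan is to treat this corollary as an essentially mechanical instantiation of the general theorem already proved, so my work reduces to verifying that house allocation with existing tenants fits inside our exchange market framework and that the feasibility/individual-rationality regime under which MIR operates matches the hypotheses of Theorem~\ref{th:cup-sp}.

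First, I would cast the house allocation problem with existing tenants as a tuple $(N,O,e,D)$: existing tenants are endowed with their current house ($|e_i|=1$), newcomers start with a null/dummy item, and each agent's demand set $D_i$ consists of singleton bundles corresponding to the houses she finds acceptable. This places the problem squarely in the single-minded dichotomous preference model of \secref{Model and Concepts}, and the standard IR requirement for existing tenants (``no tenant should be made worse off than keeping her current house'') coincides exactly with the S-IR requirement when preferences are dichotomous, since an existing tenant either gets an acceptable house or retains her endowment.

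Second, I would invoke Remark~\ref{remark:equiv}, which identifies the MIR algorithm of \citet{Aziz17a} with CUP when CUP is restricted to house allocation with existing tenants: MIR selects a matching that maximises the number of satisfied agents and then tie-breaks lexicographically according to a priority ordering, which is precisely the definition of CUP with $\rho$ being the feasibility constraints of the housing allocation problem (one house per agent, S-IR).

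Finally, since CUP with any feasibility restriction $\rho$ on S-IR allocations is strategyproof by Theorem~\ref{th:cup-sp}, the corollary follows immediately. The only substantive step, and the one I would be most careful about, is confirming that the IR notion in \citet{Aziz17a} truly coincides with S-IR in our sense for the dichotomous housing setting; once that is settled, there is no further argument to make, and the proof is just the two-line citation chain already indicated.

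\begin{proof}
The statement follows from Theorem~\ref{th:cup-sp} and Remark~\ref{remark:equiv}.
\end{proof}
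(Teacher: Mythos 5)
Your proposal is correct and matches the paper's proof exactly: the paper likewise derives the corollary in one line from Theorem~\ref{th:cup-sp} and Remark~\ref{remark:equiv}. The additional verification you sketch (casting the housing problem as an exchange market and checking that IR coincides with S-IR in the dichotomous setting) is a sensible sanity check but is subsumed by the paper's Remark~\ref{remark:equiv}, so no further argument is needed.
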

   			     \begin{proof}
     			  The statement follows from Theorem~\ref{th:cup-sp} and Remark~\ref{remark:equiv}.
   				  		     \end{proof}


If we include S-IR in the set of feasibility requirements $\rho$, then note that both CP and CUP return an allocation that is Pareto optimal allocation within the set of allocations satisfying S-IR and $\rho$.
%
Hence, we can rephrase our result in the form of the following theorem. 		
	
				     
				     \begin{theorem}\label{th:exists-sp}
	Under dichotomous preferences, for any restriction on allocations $\rho$, there exists a strategyproof mechanism that returns an allocation that satisfies constrained Pareto optimality among the set of all allocations that satisfy $\rho$ and S-IR.				     \end{theorem}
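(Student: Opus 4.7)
The plan is to take the CUP mechanism, applied with an enlarged feasibility restriction that conjoins $\rho$ with the S-IR requirement, and verify the two required properties separately. Let $\rho'$ denote this combined restriction. Since the endowment allocation is always $\rho$-feasible by the paper's standing assumption and is trivially S-IR, the set $\rho'(I)$ is non-empty for every instance $I$, so $CUP(I,\rho',\pi)$ is well-defined for any priority ordering $\pi$.

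For strategyproofness, I would invoke Theorem~\ref{th:cup-sp} directly: since $\rho'$ is itself a feasibility restriction on the set of S-IR allocations, $CUP(I,\rho',\pi)$ is strategyproof.

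For constrained Pareto optimality, I would argue by contradiction. Let $x=CUP(I,\rho',\pi)$ and suppose some $y \in \rho'(I)$ Pareto dominates $x$, so that $y_i \pref_i x_i$ for all $i\in N$ and $y_j \succ_j x_j$ for some $j\in N$. Under dichotomous preferences each utility $u_i$ takes only values in $\{0,1\}$, so these relations force $u_i(y) \geq u_i(x)$ for all $i$ together with $u_j(y)=1$ and $u_j(x)=0$. Hence $\sum_{i\in N} u_i(y) > \sum_{i\in N} u_i(x)$. But $x$ lexicographically maximizes the vector $(\sum_{i\in N}u_i(\cdot), u_{\pi(1)}(\cdot), \ldots, u_{\pi(n)}(\cdot))$ over $\rho'(I)$ by construction, and $y \in \rho'(I)$ strictly beats $x$ on the very first coordinate --- a contradiction.

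The only nontrivial step is the bridging observation that under dichotomous preferences, Pareto domination automatically lifts to a strict improvement in the utilitarian sum, which is precisely the leading term in the CUP objective. One could equivalently use CP in place of CUP and argue that a Pareto-dominating $y$ would produce a lexicographically greater priority vector $(u_{\pi(1)}(y), \ldots, u_{\pi(n)}(y))$, contradicting the definition of CP; so either CP or CUP supplies the required mechanism.
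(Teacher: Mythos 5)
Your proposal is correct and follows essentially the same route as the paper: the paper also obtains Theorem~\ref{th:exists-sp} by folding S-IR into the feasibility constraints, running CP or CUP on the resulting set, and citing Theorem~\ref{th:cup-sp}/\ref{th:cp-sp} for strategyproofness. The only difference is that you spell out the (correct) one-line argument that under $\{0,1\}$ utilities a Pareto improvement strictly increases the leading term of the lexicographic objective, a step the paper leaves implicit.
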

	
	Since imposing S-IR is similar to imposing IR when agents have trichotomous preferences, we can rephrase the theorem above as follows. 
	     
    				     \begin{theorem}\label{th:exists-sp}
    	Under trichotomous preferences where each agent only has her own endowment as the second preferred outcome, for any restriction on allocations $\rho$, there exists a strategyproof mechanism that returns an allocation that satisfies constrained Pareto optimality among the set of all allocations that satisfy $\rho$ and IR.				     \end{theorem}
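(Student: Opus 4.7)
The plan is to reduce the statement to its dichotomous counterpart stated immediately above. The reduction rests on two observations: (i)~under the stipulated trichotomous preferences, IR and S-IR define the same set of allocations, and (ii)~on that common domain, the trichotomous and dichotomous preferences induce the same Pareto ordering and the same strategic incentives.

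For (i), an allocation $x$ is IR at $i$ iff $x_i \succsim_i e_i$. Under the stated trichotomous structure -- the acceptable bundles form the top class, the singleton $\{e_i\}$ is the middle class, and all other bundles form the bottom class -- the upper contour set of $e_i$ is exactly the set of $x_i$ with $x_i = e_i$ or $x_i \supseteq d$ for some $d \in D_i$, which is the S-IR condition. Hence $\rho \cap \text{IR}$ and $\rho \cap \text{S-IR}$ coincide.

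For (ii), within the common domain no agent's bundle ever lies in the bottom trichotomous class, so Pareto comparisons only see the top two classes. On those two classes the trichotomous order and the dichotomous order (satisfied vs.\ not) agree, so a constrained Pareto optimum in one sense is a constrained Pareto optimum in the other. For strategyproofness, a deviating agent reporting $(e_i', D_i')$ with $e_i' \subseteq e_i$ and receiving $x_i'$ lies in her top true class iff $x_i'$ is $D_i$-acceptable, in her middle true class iff $x_i' = e_i$, and otherwise in her bottom class; since truth-telling always yields the top or middle class under S-IR enforcement, the only potentially profitable deviation is the jump from middle to top -- which is precisely what Theorem~\ref{th:cons-sp} forbids.

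Combining (i) and (ii), I would take CUP (with $\rho$ augmented to enforce S-IR) as the witness mechanism: Theorem~\ref{th:cup-sp} gives dichotomous strategyproofness (hence trichotomous strategyproofness by (ii)), and the lexicographic maximisation built into CUP yields constrained Pareto optimality over $\rho \cap \text{IR}$. The main obstacle I expect is the strategyproofness transfer in (ii); the cleanest route is to check, as sketched, that on the IR/S-IR domain the profitable-manipulation question reduces exactly to its dichotomous version, so no trichotomous-only incentive can arise.
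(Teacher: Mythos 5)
Your proposal is correct and follows essentially the same route as the paper: the paper obtains this theorem by simply ``rephrasing'' the preceding dichotomous result, relying on the observation (made in the model section) that enforcing S-IR under dichotomous preferences is the same as enforcing ordinary IR under these special trichotomous preferences. Your points (i) and (ii) just make that identification explicit, and your check that the only potentially profitable trichotomous deviation is the middle-to-top jump already excluded by Theorem~\ref{th:cons-sp} is a correct (and more careful) spelling-out of what the paper leaves implicit.
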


	     \section{Impossibility Results}
	     

In the previous sections, we limited our attention to allocations that satisfy S-IR (strong individual rationality). We now explore the consequence of dropping the S-IR requirement. An allocation $x$ is \emph{individually rational (IR)} if $e_i \supseteq d$ for some $d\in D_i$ implies that  $x_i\supseteq d'$ for some $d'\in D_i$. 

Whereas IR is a less stringent requirement than S-IR, constrained Pareto optimality with respect to allocations satisfying IR is a stronger property than constrained Pareto optimality with respect to allocations satisfying S-IR. In fact, the property is stronger enough that our central results in the previous sections collapse. In particular Theorem~\ref{th:exists-sp} changes into an impossibility result as we expand the set of feasible allocations.

The proof is based on an adaptation of an impossibility result for strategyproof mechanisms maximizing the total utility of agents in which agents want to get as many desirable items as possible and their utilities are not single-minded~(Theorem 1, \citep{ZHM15a}). We adapt the argument to only consider agents having single-minded utilities: they are only satisfied if they get one of their targets sets of items. The subtle difference from the previous section is that we do not enforce S-IR which expands the set of feasible of allocations and leads to an impossibility result.

We say that an allocation is a \emph{result of pairwise exchanges} if it is a result of pairs of agents make one-for-one exchange for items and with each item changing ownership at most once. Considering the exchange cycles view of allocations as mentioned in 
Section 3, an allocation as a result of pairwise exchange is characterized by exchange cycles in which each cycle has at most two agents in it. 
 
				     \begin{theorem}\label{th:imp}
	Consider $\rho$ as the restriction of allowing allocations that are a result of pairwise item exchanges in which agents get desirable items.  
For $|N|\geq 3$, there exists no strategyproof mechanism that returns an allocation that is constrained Pareto optimal among the set of all allocations that satisfy $\rho$.	
\end{theorem}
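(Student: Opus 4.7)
The plan is to prove the impossibility by contradiction, adapting the argument of Theorem~1 of \citet{ZHM15a} from the cardinal utility setting to single-minded dichotomous preferences. I would assume for contradiction that some mechanism $M$ is strategyproof and always returns an allocation that is constrained Pareto optimal among the $\rho$-feasible allocations, and then construct a specific three-agent instance whose $\rho$-feasible allocations force a contradiction.

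The key structural feature to exploit is that dropping S-IR enlarges the set of $\rho$-feasible allocations to include those in which an agent trades away her endowment items while receiving only ``desirable'' items (items lying in some of her target bundles) that are insufficient to fully satisfy her. This enlargement introduces new, mutually Pareto-incomparable allocations into the Pareto frontier, each satisfying a different subset of agents. The instance would be constructed so that under truthful reporting there exist multiple such incomparable Pareto optima, each leaving a different agent unsatisfied, and so that for the specific unsatisfied agent $i$ at $M$'s output, there is a misreport $(e_i', D_i')$ --- either shrinking $i$'s endowment declaration or adjusting her demand set --- that reshapes the feasible Pareto frontier so that every remaining Pareto-optimal allocation truly satisfies $i$. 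Pareto optimality of $M$ on the perturbed profile would then force $M$ to return such an allocation, contradicting strategyproofness because $i$ is unsatisfied under truthful reporting but satisfied under the misreport.

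The main obstacle is finding the right instance: the misreport by $i$ must (a) eliminate every Pareto-optimal allocation of the perturbed profile in which $i$ is truly unsatisfied, and (b) not introduce any new Pareto-optimal allocation that still leaves $i$ truly unsatisfied. Here the contrast with Theorem~\ref{th:exists-sp} becomes sharp: in the S-IR setting, CP and CUP remain strategyproof precisely because the restricted feasible set forbids exactly those ``partial-trade'' allocations through which the manipulation in our construction would operate; once S-IR is dropped, the added admissible allocations introduce new ways for an unsatisfied agent to strategically reshape which allocations are Pareto-undominated in her favor, and the delicate design of the three-agent instance is what forces any SP+PO mechanism to contradict itself across the truthful and perturbed profiles.
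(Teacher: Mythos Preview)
Your plan matches the paper's approach exactly: the paper also adapts \citet{ZHM15a}'s Theorem~1 by constructing a three-agent instance in which at most two agents can be satisfied under $\rho$, and then argues case-by-case that whichever agent $M$ leaves unsatisfied can misreport her demand set so that every remaining constrained Pareto-optimal allocation satisfies her. The concrete instance (nine items, three per agent, with demand sets of the form $D_i=\{S\subset A_i:|S|=3\}$ for carefully overlapping ``liked'' sets $A_i$) and the three per-agent misreports are the substance of the paper's proof, and your proposal correctly identifies this construction as the main remaining work.
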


	\begin{proof}
		Consider a set of three agents $N=\{1,2,3\}$. 
\begin{itemize}
	\item $e_1=\{a_1,a_2,a_3\}$.
	\item $e_2=\{b_1,b_2,b_3\}$.
	\item $e_3=\{c_1,c_2,c_3\}$.
\end{itemize}

Each agent $i$ likes some items $A_i\subseteq O$:
\begin{itemize}
	\item $A_1=\{b_1,b_3,c_1,c_2,c_3\}$.
	\item $A_2=\{a_1,a_3,c_1,c_2,c_3\}$.
	\item $A_3=\{a_2,a_3,b_2,b_3\}$.
\end{itemize}

\begin{itemize}
	\item $D_1=\{S\subset O \midd |S|=3 \text{ and } S\subset A_1\}$.
\item $D_2=\{S\subset O \midd |S|=3 \text{ and } S \subset A_2\}$.
				\item $D_3=\{S\subset O \midd |S|=3 \text{ and } S\subset A_3\}$.
\end{itemize}


Since we allow allocations as a result of pairwise exchanges, the resultant allocation has a corresponding matching $M$ on the node set $O$ that indicates which items are pairwise exchanged. Also, note that each agent owns items that she does not like. Each agent needs three liked items to be satisfied. Since we allow allocations as a result of pairwise exchanges, a satisfied agent need to pairwise exchange all her items with liked items. 

We first prove that in any allocation that satisfies $\rho$ and is Pareto optimal, exactly two agents are satisfied. We first prove that at most two agents can be satisfied.
Since only allocations as a result of pairwise exchanges are allowed, the maximum sized matching for 9 item nodes is at most 8. In other words, at most 8 items are exchanged so one agent still keeps an unliked item. 
Thus, at least one agent can get at most two acceptable items so it is not satisfied. 
It is also easy to see that at least two agents can be satisfied by pairwise exchanges. 

Suppose agent 1 is not satisfied. Then agent 1 can report  
$A_1'=\{b_3,c_1,c_2,c_3\}$ and 	$D_1'=\{S\subset O \midd |S|=3 \text{ and } S\subset A_1'\}$. In that case, agent 1 is not willing to get item $b_1$ in any exchange and can force this because of the requirement we have in $\rho$. Agent 2 has one item which is useless to other agents so in any $\rho$-feasible allocation it cannot be satisfied because it does not get three items from $A_2$. Therefore the only constrained Pareto optimal allocations are those in which agent 1 exchanges $a_1$ with $b_3$, and $a_2$ and $a_3$ with two the items among $e_3$. Therefore by changing her preference relation, agent $1$ gets satisfied. 

Suppose agent 2 is not satisfied. Then agent 1 can report  
$A_2'=\{a_3,c_1,c_2,c_3\}$ and 	$D_2'=\{S\subset O \midd |S|=3 \text{ and } S\subset A_2'\}$. In that case, agent 2 is not willing to get item $a_1$ in any exchange and can force this because of the requirement we have in $\rho$. Agent 1 has one item which is useless to other agents so it in any $\rho$-feasible allocation it cannot be satisfied because it does not get three items from $A_1$. Therefore the only constrained Pareto optimal allocations are those in which agent 2 exchanges $b_1$ with $a_3$, and $b_2$ and $b_3$ with two the items among $e_3$. Therefore by changing her preference relation, agent $2$ gets satisfied. 

Suppose agent 3 is not satisfied. Then agent 3 can report  
$A_3'=\{a_3,b_2,b_3\}$ and 	$D_1'=\{S\subset O \midd |S|=3 \text{ and } S\subset A_3'\}$. In that case, agent 3 is not willing to get item $a_2$ in any exchange and can force this because of the requirement we have in $\rho$. Agent 1 has one item which is useless to other agents so it in any $\rho$-feasible allocation it cannot be satisfied because it does not get three items from $A_2$. Therefore the only constrained Pareto optimal allocations are those in which agent 3 exchanges all her items with the items in the set $A_3'$. Therefore by changing her preference relation, agent $3$ gets satisfied. 
	\end{proof}
	
	The impossibility result is tight in the sense that for $|N|\leq 2$, the properties can be easily satisfied by the CP mechanism with the constraint of individual rationality. Next, we note that Theorem~\ref{th:imp} can be rephrased as follows.
	
	\begin{theorem}
	Consider $\rho$ as the restriction of allowing allocations that are a result of pairwise item exchanges in which agents get desirable items.  
	For $|N|\geq 3$, there exists no strategyproof mechanism that returns an allocation that is constrained Pareto optimal among the set of all allocations that satisfy $\rho$ and IR.		     \end{theorem}	
\begin{proof}
	Consider the proof of the previous theorem. Since none of the agents is satisfied by their endowment, the IR requirement has no bite. In particular, the set of allocations satisfying IR coincides with the set of all allocations. Hence, constrained Pareto optimality among the set of all allocations that satisfy $\rho$ is equivalent to constrained Pareto optimal among the set of all allocations that satisfy $\rho$ and IR. 
	\end{proof}	
	
	\section{Conclusions}
	
	 We have provided a clearer understanding of strategyproof multi-item exchange markets in which agents have dichotomous preferences. We focussed on scenarios where agents have single-minded preferences: they are either satisfied if they get a target bundle or they are dissatisfied if they do not get a target bundle. Under these preferences, we showed that strategyproofness can be achieved even under arbitrary restrictions on feasible allocations. In fact, there exist mechanisms that satisfy constrained Pareto optimality. We complement these results with a couple of impossibility results when we strengthen constrained Pareto optimality by enforcing a weaker notion of individual rationality. We envisage further algorithmic and game-theoretic research on the general multi-item exchange model discussed in the paper.

\section*{Acknowledgments}
Aziz gratefully acknowledges the UNSW Scientia Fellowship and Defence Science and Technology (DST). 
He thanks Barton Lee and Bahar Rastegari for useful feedback.

%

\end{document}